\documentclass[a4paper,11pt]{article}
\usepackage[T1]{fontenc}

\usepackage{fullpage}
\usepackage{authblk}
\usepackage{dsfont}
\usepackage[makeroom]{cancel}

\usepackage[numbers]{natbib}

\usepackage{xcolor}
\usepackage{comment}

\usepackage{graphicx} 
\usepackage{subcaption}
\usepackage{researchpack} 

\usepackage{hyperref}
\usepackage[capitalise]{cleveref}

\usepackage[most]{tcolorbox}

\title{With a Little Help From My Friends: Collective Manipulation in Risk-Controlling Recommender Systems}

\author[1]{Giovanni De Toni} 
\author[2]{Cristian Consonni} 
\author[2]{Erasmo Purificato}
\author[2]{\\Emilia Gomez} 
\author[1]{Bruno Lepri} 

\affil[1]{Fondazione Bruno Kessler, Italy \texttt{\{gdetoni,lepri\}@fbk.eu}}
\affil[2]{European Commission, Joint Research Centre (JRC)\thanks{\textbf{Disclaimer}: The view expressed in this paper is purely that of the authors and may not, under any circumstances, be regarded as an official position of the European Commission.} \texttt{\{cristian.consonni,erasmo.purificato,emilia.gomez-gutierrez\}@ec.europa.eu}}

\date{\vspace{-10mm}}

\newcommand{\gdt}[1]{\textcolor{black}{#1}}

\begin{document}

\maketitle

\begin{abstract}
Recommendation systems have become central gatekeepers of online information, shaping user behaviour across a wide range of activities.
In response, users increasingly organize and coordinate to steer algorithmic outcomes toward diverse goals, such as promoting relevant content or limiting harmful material, relying on platform affordances -- such as \textit{likes}, \textit{reviews}, or \textit{ratings}.
While these mechanisms can serve beneficial purposes, they can also be leveraged for adversarial manipulation, particularly in systems where such feedback directly informs safety guarantees.
In this paper, we study this vulnerability in recently proposed risk-controlling recommender systems, which use binary user feedback (e.g., \textit{``Not Interested''}) to provably limit exposure to unwanted content via conformal risk control.
We empirically demonstrate that their reliance on aggregate feedback signals makes them inherently susceptible to coordinated adversarial user behaviour. 
Using data from a large-scale online video-sharing platform, we show that a small coordinated group (comprising only $1\%$ of the user population) can induce up to a $20\%$ degradation in nDCG for non-adversarial users by exploiting the affordances provided by risk-controlling recommender systems.
We evaluate simple, realistic attack strategies that require little to no knowledge of the underlying recommendation algorithm and find that, while coordinated users can significantly harm overall recommendation quality, they cannot selectively suppress specific content groups through reporting alone.
Finally, we propose a mitigation strategy that shifts guarantees from the group level to the user level, showing empirically how it can reduce the impact of adversarial coordinated behaviour while ensuring personalized safety for individuals.
\end{abstract}

\section{Introduction}\label{sec:intro}
Recommendation systems are used daily by nearly 50\% of the global population~\cite{bojic2024ai} for a variety of activities, ranging from entertainment to professional work.
They are also known to amplify potentially \textit{harmful content}, including \textit{fake news}~\cite{tomasel2022fakenews} and \textit{hate speech}~\cite{matamoros2021racism}, contributing to more toxic online environments~\cite{mathew2020hate} with real-world consequences such as adverse mental health outcomes \cite{golbeck2025RecsysEDrelapse, khasawneh2020examining}.
\gdt{Because these systems strongly shape what information users encounter, people often attempt to influence their behaviour through the signals they provide to the platform  - such as \textit{ratings}, \textit{reviews}, or \textit{likes}.
Users have been observed to adopt several strategies to influence recommender behaviour~\cite{jhaver2023personalizing}, often guided by informal \textit{folk theories} about how algorithms operate~\cite{morik2020controlling, eslami2016first, eslami2015always}.}
\gdt{Prior research on \textit{algorithmic collective action}~\cite{hardt2023algorithmic,sigg2025decline}} shows that these practices can scale beyond individual users, where they collectively adapt their interactions to shape future recommendations, with goals that extend beyond their personal feeds~\cite{cen2024measuring,karizat2021algorithmicfolktheories}.
\gdt{Such mobilization may arise \textit{organically}~\cite{kuchera2017anatomy} or be organized through \textit{paid labor}~\cite{lee2013crowdturfers,fayazi2015uncovering}, generating bursts of coordinated signals intended to influence content's ranking and visibility.}
Examples include attempts to boost the exposure of certain items~\cite{Marasciulo2022Anitta} or to curb the spread of harmful content by sharing screenshots rather than engaging directly with original posts~\cite{burrell2019users}.
More broadly, research has documented many grassroots and collective forms of algorithmic action in social media~\cite{burrell2019users,o2019weapons} as well as in gig-economy platforms~\cite{sigg2025decline,moralesmunoz2022spatiality} (\eg \texttt{\#DeclineNow} movement~\cite{ongweso2021doordash}) where users coordinate to induce desired positive changes in algorithmic behaviour.
\gdt{In other cases, collective interactions with platform signals take explicitly adversarial forms.
Examples include \textit{review bombing} campaigns on gaming platforms (\eg Steam), where coordinated negative feedback is used to downrank particular items following public controversies or in protest against platform policies~\cite{tomaselli2022review,kuchera2017anatomy,jiang2023ai}.
More broadly, coordinated manipulation of platform signals has also been used to strategically influence information ecosystems, for instance, through \textit{online astroturfing} campaigns~\cite{schoch2022coordination} that attempt to amplify political narratives or misinformation artificially (\eg 2016 U.S. presidential elections~\cite{bail2020IRAelections}).}

\gdt{
Taken together, these examples illustrate how collective behavior can meaningfully reshape the signals that recommendation algorithms rely on.}
\gdt{More importantly, users' signals are not only used to rank content but are increasingly central to efforts to mitigate the \textit{harms} associated with recommendation systems (\eg X's Community Notes).}
\gdt{For example}, a common approach has been to provide users with interface-level controls, such as a \textit{``Not Interested''} button, that enable them to signal dissatisfaction with an item directly to the platform~\cite{hong2025social}.
\gdt{However, the same signals may also be strategically manipulated by coordinated groups of users.
As a result, affordances intended to improve user safety and content quality may inadvertently create \textit{new levers} for influencing recommendation outcomes.}
In this paper, we investigate the following question:
\begin{center}
\textit{What would happen if a group of users organized themselves to strategically use the ``Not Interested'' feature to alter recommender behavior?}
\end{center}
\gdt{Specifically, we investigate how coordinated user behavior may interact with the guarantees provided by \textit{risk-controlling recommender systems}~\cite{detoni2025you}.}
They are a recently proposed recommender system that directly leverages users’ negative feedback (the “\textit{Not Interested}” signal) to bound, in expectation, the frequency of unwanted items in users’ feeds via conformal risk control~\cite{angelopoulos2024conformal}.
They introduce a filtering layer that excludes potentially harmful or unwanted items based on a \textit{global threshold}, calibrated on a held-out set of historical user-item interactions.
In contrast to standard interface-level controls, these \textit{risk-controlling recommender systems} provide the first formal guarantees that directly link user actions, such as reporting an item, to deterministic changes in recommendation outcomes in expectation.

\gdt{Perhaps unintuitively, in our paper}, we show that adversarial coordinated behaviour can only \textit{strengthen} the formal risk-control guarantees: adversarial users can artificially increase the empirical risk observed during calibration, which in turn forces the system to adopt \textit{more conservative} filtering thresholds.
However, we demonstrate that this seemingly beneficial effect may come at a \textit{significant cost to recommendation quality}.
In particular, coordinated reporting can induce \emph{disproportionate degradations} in system performance, even when carried out by very small collectives.
For example, a group of just 40 adversarial users, reporting at most 1\% of the items they encounter, may reduce standard nDCG by up to 20\% \gdt{in the worst case scenario}.
Interestingly, although changes to the risk-control guarantees scale \textit{linearly} with the number of coordinated users, our results show that even small collectives can make it substantially more difficult to lower the expected risk experienced by non-adversarial users at test time.
This asymmetry is consistent with prior work \gdt{postulating} that, in settings dominated by weak signals, such as binary ``\textit{Not Interested}'' feedback\footnote{For example, in a dataset from the large-scale video-sharing platform \textit{Kuaishou} \cite{gao2022kuairand}, user-provided “\textit{Not Interested}” feedback is extremely sparse, with an average reporting rate of approximately $0.002\%$ per user~\cite{detoni2025you}.}, coordinated users who inject \textit{stronger signals} can exert outsized influence \cite{hardt2023algorithmic}.
We also show that adversarial users cannot selectively suppress the exposure of a target item group (\ie the frequency with which items from that group appear in users’ top-$k$ recommendations), since the underlying risk-controlling procedure is agnostic to group membership\footnote{Nevertheless, risk-controlling recommender systems may still induce \emph{disparate impacts} when the underlying recommender exhibits pre-existing biases toward particular item groups (\cf \cref{app:disparate-impact}).}.
\gdt{Lastly, we describe and empirically validate a simple improvement that makes risk-controlling recommender systems robust to adversarial manipulation by enforcing guarantees at \textit{individual} rather than population level.}

\gdt{Our work can be viewed as a \textit{pre-deployment audit} of risk-controlling recommender systems, examining how coordinated user behaviour may interact with their guarantees.
Indeed, several regulations and standardized frameworks propose \textit{auditing} as a mechanism to identify and mitigate potential risks associated with artificial intelligence systems~\cite{lam2024framework,ftc2022commercialsurveillance,eu2022dsa,union2021proposal}.
For instance, the EU’s \textit{Digital Services Act} (DSA)~\cite{eu2022dsa} requires very large online platforms to conduct systemic risk assessments that must account for risks arising from \textit{intentional manipulation} of the service, including \textit{coordinated} or automated activity (Article 34(2) of the EU DSA).
Motivated by this perspective, we argue that evaluating such dynamics should be part of the preliminary assessment of risk-controlling mechanisms before their widespread deployment in real-world platforms.}

\paragraph{Our contributions.}
Formally, we first theoretically characterise how the coordinated users who \textit{strategically} employ the \textit{``Not Interested''} feedback mechanism affect the risk-control guarantees of recommender systems (\cref{theorem:adv-impact}).
Building on these results, we analyse a set of realistic \textit{reporting strategies}\footnote{Throughout the paper, we use the term ``report'' to indicate a feedback given to an item through a ``\emph{Not Interested}'' or ``\emph{Show me less of this}'' button. This should not be confused with reporting an item in terms of flagging content for violating the terms of service, community standards, or other norms of the platform, an action that triggers immediate removal or review of the item that is usually reserved for specific categories of content.} that collectives may adopt to maximise their influence on recommendation outcomes (\cref{sec:reporting_strategies}).
Second, we empirically evaluate the impact of such collectives on a risk-controlling recommender system~\cite{detoni2025you} using real-world interaction data from a large online video-sharing platform, \textit{Kuaishou} (\cref{sec:evaluating_effects_of_collective}).
Our evaluation focuses on standard recommendation metrics, including \textit{nDCG} and \textit{Recall}, as well as on changes in \textit{item exposure}.
\gdt{Finally, we empirically show how calibrating a risk-controlling recommender system with an individual-level threshold can successfully mitigate adversarial collective behaviour (\cref{sec:mitigating-coll-action})}.

\section{Related Work}
\label{sec:related_work}

\gdt{Our paper builds upon further related work on algorithmic collective action, adversarial attacks on recommender systems, and countering the harmfulness of recommender systems.}
Despite the varying interpretations of coordinated user behavior, ranging from \textit{beneficial} \cite{hardt2023algorithmic} to \textit{adversarial} \cite{lee2013crowdturfers}, we use the terms \textit{``collective''}, \textit{``coordinated groups''}, \textit{``strategic users''}, and \textit{``adversaries''} interchangeably throughout this paper.

\paragraph{Collective Action on Algorithmic Systems.}
\gdt{
\citet{hardt2023algorithmic} initiated the theoretical study of \textit{algorithmic collective action} by employing a simple model to describe how the coordinated efforts by groups of users can steer model outcomes.
\citet{sigg2025decline} further studied a combinatorial model to describe the strategic interaction between workers and a gig economy platform.}
\gdt{Several other works expanded the study of algorithmic collective action in general recommendation systems~\cite{mendler2024engine,baumann2024algorithmic,karan2025algorithmic,karan2025sync}.
For example, by proposing frameworks to study the interactions between different groups with distinct objectives \cite{karan2025algorithmic,karan2025sync} underlying how the collective size, rather than the homogeneity or heterogeneity of the group, is the more critical factor in how effectively a collective can manipulate recommender system outcomes.
Our work is the first to study the effects of collectives on affordances designed to mitigate the spread of harmful content on recommender systems.}
Specifically linked to our work is the study by \citet{baumann2024algorithmic}, which investigated collective action in the context of music recommendation.
They demonstrated that a small collective of fans (controlling as little as 0.01\% of training data) could significantly amplify the visibility of an underrepresented artist by strategically reordering their playlists.
\gdt{Our setting is conceptually distinct. We study the effects of users interacting with a \textit{``Not Interested''} affordance and evaluate broader platform-level consequences, including recommendation performance and item exposure.
In addition, we focus on a \textit{risk-controlling recommender system}~\cite{detoni2025you}, whereas they consider a classical recommender model.
Finally, unlike \citet{baumann2024algorithmic}, we do not retrain the recommender.
In our experiments, the recommender remains fixed, and only the risk calibration threshold is recomputed, an intervention that is substantially less costly for the platform (\cf \cref{sec:preliminaries}).}

\paragraph{Collective Adversarial Activities in Recommender Systems.}

\gdt{The literature on \textit{``astroturfing''} \cite{schoch2022coordination} or \textit{``crowdturfing''} activities \cite{tricomi2023we,lee2013crowdturfers,wang2012serf} shows how it is possible to cheaply acquire fake engagement on certain gig economy platforms (\eg Fiverr, Sina Weibo) - often mediated by low-wage workers hired to create fake grassroots support or spam, such as \textit{fake reviews}, \textit{comments}, or \textit{likes}. 
For example, studies show that it costs around 0.80-3.00\$ to hire 100 real users who will interact with the target profiles to boost their engagement rate~\cite{tricomi2023we}.
These profiles, being \textit{real users}, are harder to detect by online platforms, and they represent a potential source of malicious collective action~\cite{wang2012serf}.
Our work considers such a real-world setting, where a malicious group of users organizes to collectively exploit the risk-controlling guarantees of a platform towards altering its behaviour.}
Furthermore, our work is technically close to the concept of \textit{shilling attacks}~\cite{zhang2025robust} -- also known as  
\textit{profile-injection attacks} -- where malicious actors create and inject fake user profiles, user interactions, or ratings.
For example, \citet{anelli2021adversarial} illustrated how an adversary might modify audio signals in a music recommender system to associate tracks with explicit and illegal content, linking adversarial perturbations to harmful recommendations.
\gdt{For these reasons, we also provide an empirically validated strategy that can mitigate the effect of adversarial collective actions on simulation studies.}
Our discussion here on adversarial attacks is necessarily limited, so we refer to further surveys~\cite{fan2022comprehensive,wang2024trustworthy} highlighting how adversarial attacks are capable of influencing public opinion by spreading fake news and manipulated content through recommender systems.

\paragraph{Countering the Harmfulness of Recommender Systems.}
Interest in understanding how to deal with \textit{harmful content of recommender systems} in online platforms has spanned the last decade.
\gdt{Beyond harmful material, recommenders can also disseminate content that users perceive as \textit{unwanted}, intended as what may conflict with the users' personal values, trigger negative emotions, or may be essentially mediocre, thus perceived as a \textit{``waste of time''}~\cite{hong2025social,liu2024train}.}
\citet{gillespie2022not} described how social media platforms reduce the visibility and reach of potentially dangerous content by demoting it in algorithmic rankings and recommendations, rather than removing it entirely.
By including a negative signal to prevent specific content from being recommended to users, this study shows how platforms can use recommenders as an effective means of targeted content moderation.
Along this line, various concrete strategies have been proposed to tackle harmful content in recommender systems~\cite{chee2024harm,liu2024train,arora2023detectingharm}.
\gdt{Unfortunately, evidence suggests that these mechanisms are often ineffective since they usually suffer from low visibility~\cite{ibrahim2024characterizing} and are opaque in how they affect the recommendation process, particularly beyond the user’s own feed~\cite{wang2025end,hong2025social}.
For instance, users often expect negative feedback to reduce the prevalence of similar content, without penalising the content itself or its creator~\cite{hong2025social}.}
Recently, \citet{detoni2025you} introduced \textit{risk-controlling recommender systems}, the first model-agnostic method using conformal risk control to provably bound unwanted content in personalised recommendations with distribution-free and model-free guarantees.
\gdt{In the following sections, we will investigate and audit such \textit{risk-controlling recommender systems} by testing their effect in the presence of a collective group of users acting adversarially.}
\section{Preliminaries}
\label{sec:preliminaries}

Let us now formally outline how to build a \textit{risk-controlling recommendation system}.
Consider a finite set of items $\calI = \{i_1, \ldots, i_N\}$ and users $\calU = \{u_1, \ldots, u_M\}$.
Without loss of generality, we assume that each item (or user) is described by a $d$-dimensional feature vector.
Let us consider a ranker $f: \calU \times \calI \rightarrow \mathbb{R^+}$ that estimates how \textit{relevant} an item is, based on the user's profile and the item's characteristics.
Given the ranker and a target user $u \in \calU$, we usually want to pick the best $k \in \bbN^+$ items to display to the user the recommendation set $\calS(U=u, k) = \{ i_j \in \calI : \pi(i, u) \leq k \}$, where $\pi: \calI \times \calU \rightarrow \mathbb{N}$ returns an item ordering induced by the ranker scores, \eg from the highest to the least preferred item.  
In standard \textit{learning-to-rank} tasks, we want to find the optimal ranker $f^* = \argmax_{f \in \calF} \ell(S(U,k))$ that maximises a target metric $\ell: \calI \times \calU \rightarrow \mathbb{R}$, such as the \textit{engagement}.

Let $R: 2^{|\calI|} \times U \rightarrow [0,1]$ be a function that describes the \textit{risk} of giving a recommendation set $\calS \subseteq \calI$ to a user.
For example, the function can model the \textit{unwantedness} of the recommendations, such as the fraction of items flagged as ``\textit{Not Interested}'' by a user. 
Further, consider $r: \calI \times \calU \rightarrow [0,1]$ to be a score function that quantifies the risk of a single item (\eg the probability that a user will flag an item), learned from historical user-item interactions.
Adopting a common two-stage setup \gdt{used in many commercial recommender systems~\cite{huang2025comprehensive}}, we filter items based on their score, \gdt{removing them only if it is} greater than a threshold $\lambda \in \Lambda$, before building the final recommendation set:
\begin{equation}
    \calS_\lambda(U=u, k) = \{ i_j \in \calT_{\lambda}(U=u) : \pi(i, u) \leq k \} \quad \text{where} \quad \calT_{\lambda}(U=u) = \{ i \in \calI : 1-r(i, u) \geq \lambda \}
    \label{eqn:recommendation-set-filter}
\end{equation}
\gdt{where $\calT_{\lambda}(U=u)$ denotes the set of candidate items whose score is above the threshold.}
\gdt{In the classical conformal risk control formulation~\cite{angelopoulos2024conformal}, the filtering condition is typically written as $r(i,u) \geq \lambda$. 
Our formulation, $1 - r(i,u) \geq \lambda$, is equivalent under the assumption that $r(i,u) \in [0,1]$, and selects items with lower predicted risk (\ie lower probability of being flagged as harmful).}
In summary, we want to find the optimal threshold $\lambda$ that maximises the given objective, such as the \textit{engagement}, by keeping the target \textit{risk} below a platform-defined level $\alpha \in [0,1]$ in expectation:
\begin{equation}
    \lambda^* = \argmax_{\lambda \in \Lambda} \bbE[\ell(S_\lambda(U, k))] \qquad \text{s.t.} \qquad \bbE[R(S_\lambda(U, k))] \leq \alpha
    \label{eqn:risk-controlling-set}
\end{equation}
\citet{detoni2025you} showed that we can provably control the risk in recommender systems, as defined by Equation~\ref{eqn:risk-controlling-set}, using simple \textit{binary} user feedback, captured through ``\textit{Not Interested}'' interactions by exploiting \textit{conformal risk control}~ \cite{angelopoulos2024conformal}.
Under standard assumptions, including data exchangeability and mild regularity conditions on the risk function (\eg monotonicity), they show that a valid filtering threshold can be selected as $\hat{\lambda} = \inf \left\{ \lambda : \frac{Q}{Q+1}\hat{R}(\lambda) + \frac{1}{Q+1} \leq \alpha \right\}$, where $\hat{R}(\lambda) = \frac{1}{Q}\sum_{j=1}^{Q} R(S_\lambda(U = u_j, k))$ is the empirical risk estimated on a \textit{calibration set}, a held-out collection of $Q$ user-item interactions used exclusively to guarantee risk control.
They formalise the risk of a recommendation set $S_\lambda(U, k)$ as:
\begin{equation}
R(S_\lambda(U, k)) = \frac{|{ i \in S_\lambda(U, k) : H(U, I = i) = 1 }|}{|S_\lambda(U, k)|}
\label{eqn:set-based-harmfulness}
\end{equation}
where $H$ is a binary random variable, drawn from $P(H \mid I, U)$, indicating whether an item is deemed unwanted by a user ($H = 1$).
To ensure that the recommendation set remains of size $k$ (or close to it), despite potentially aggressive filtering, they propose filling filtered positions with previously consumed items that have not been flagged as unwanted, referred to as \textit{repeated safe items}. 
More broadly, conformal risk control is both \textit{model-free} and \textit{post hoc}: it can be applied on top of any pretrained ranker $f$ or risk predictor $r$ that follows the two-stage architecture, requiring minimal additional assumptions and integration effort.

\paragraph{Modeling Coordinated Attacks in Risk-Controlling Recommender.}
\gdt{We consider an adversarial variant of algorithmic collective action \cite{hardt2023algorithmic,baumann2024algorithmic}, in which a subset of users coordinates to strategically manipulate system inputs with the \textit{explicit goal of degrading or suppressing the exposure of targeted content} (\eg silencing a creator or reducing the visibility of a content group).
Unlike prior work, where collectives pursue socially motivated or welfare-improving objectives, we focus on adversarial objectives that are strategically aligned but not necessarily \textit{ethically grounded}.
Such collectives can plausibly emerge in online environments where users share common interests or antagonisms, including loosely organized communities (\eg Reddit, 4chan, 8chan) or more structured campaigns.
Empirical evidence shows that large-scale coordinated actions -- ranging from attention manipulation (\eg \texttt{\#TulsaFlop}~\cite{bandy2020tulsaflop}) to harassment campaigns (\eg \texttt{\#GamerGate}~\cite{mortensen2020negative}) -- can arise with minimal central coordination, often relying on broadcast communication and shared intent rather than tight synchronization.
Moreover, low-cost crowdturfing infrastructures \cite{lee2013crowdturfers} further reduce coordination barriers by enabling centrally orchestrated manipulation through paid participants.
In our setting, these mechanisms translate into users collectively marking targeted items as \textit{``Not Interested''}, thereby providing systematically biased feedback to the recommender system.
We assume a weak coordination model: participants share a common target and action (\eg flagging a predefined set of items), but do not require fine-grained information about other users, or real-time synchronization.
This captures both \textit{decentralized campaigns} and \textit{centrally coordinated efforts}.
Formally, given a calibration set of $Q$ users, we assume that a subset $K < Q$ behaves adversarially\footnote{In practice, a risk-controlling recommender system must calibrate its filtering threshold $\lambda$ using the full user population to avoid distribution shifts~\cite{angelopoulos2024conformal} that could happen at test time.
\gdt{Moreover, it has been shown that some platforms have more relaxed policies, thus allowing users to create potentially dozens of new accounts~\cite{zeng2022content}.}
As a consequence, adversarial users are likely to be included in the calibration set by default.}, yielding a fraction $\beta = \frac{K}{Q}$.
Finally, we formalize the collective’s objective along two distinct dimensions: (i) the degradation of overall recommendation performance, measured by the decline in recommendation quality experienced by non-adversarial users at test time, and (ii) the targeted suppression of specific groups of items, reflected in reduced exposure.}

\section{Manipulating a Risk-Controlling Recommendation System}

To identify the effective targets of adversarial behaviour, within the theoretical framework of conformal risk control, we first analyse the impact of a coordinated group of adversarial users on the resulting upper bound on risk.
Unlike prior work on adversarial conformal prediction and risk control \cite{zargarbashi2024robust,angelopoulos2024conformal,gendler2021adversarially} or classical \textit{shilling attacks} \cite{zhang2025robust}, the adversaries we consider do not inject forged or poisoned data (\eg by manipulating items or interactions).
Conversely, they operate entirely within the platform’s intended affordances, strategically using standard feedback mechanisms.
We begin by presenting our main theoretical result, followed by a discussion of its key practical implications.
The complete proof is available in \cref{app:proofs}. 

\begin{theorem}

Consider a held-out calibration set $\calQ = \{(u,i,h)_j\}_{j=1}^Q$.
Let us assume that there are $K$ users within $Q$ that are behaving strategically.
Given a target level $\alpha \in [0,1]$, let us consider $\hat{\lambda} \in \Lambda$ choosen as $\hat{\lambda} = \inf \left\{ \lambda : \frac{Q}{Q+1}\hat{R}(\lambda) + \frac{1}{Q+1} \leq \alpha \right\}$ where $\hat{R}(\lambda) = \frac{1}{Q}\sum_{j=1}^Q R(S_\lambda(U=u_j, k))$ is computed over the calibration set. 
Lastly, let us denote the non-adversarial users with $U_{nonadv}$.
Then, under data exchangeability, we have that the expected risk for non-adversarial users is bounded from above by:
\begin{equation}
    \bbE[R(S_\lambda(U_{nonadv}, k))] \leq \max\{0, \alpha -\frac{K}{Q+1}r_\lambda^{adv}\}
\end{equation}
where \gdt{$r_\lambda^{adv} = \frac{1}{K}\sum_{u_{adv} \in \calK} R(S_\lambda(U=u_{adv}, k))$ is the risk of the adversarial users $u_{adv} \in \calK \subset \calQ$ within the calibration set.}
\label{theorem:adv-impact}
\end{theorem}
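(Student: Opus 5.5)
The plan is to reuse the standard conformal risk control argument and split the empirical risk into adversarial and non-adversarial parts. The first step is to write the calibration sum as
\[
Q\hat{R}(\lambda) = \sum_{u_j \in \calK} R(S_\lambda(U=u_j,k)) + \sum_{u_j \in \calQ\setminus\calK} R(S_\lambda(U=u_j,k)) = K r_\lambda^{adv} + (Q-K)\hat{R}_{nonadv}(\lambda),
\]
where $\hat{R}_{nonadv}$ is the empirical risk over the $Q-K$ non-adversarial calibration users. The defining condition for $\hat\lambda$ then reads
\[
\frac{1}{Q+1}\Big(K r_{\hat\lambda}^{adv} + (Q-K)\hat{R}_{nonadv}(\hat\lambda) + 1\Big) \leq \alpha ,
\]
which rearranges to
\[
\frac{(Q-K)\hat{R}_{nonadv}(\hat\lambda) + 1}{Q+1} \leq \alpha - \frac{K}{Q+1} r_{\hat\lambda}^{adv}.
\]
Monotonicity of $R$ in $\lambda$ (assumed in the preliminaries), together with right-continuity of the infimum, ensures the inequality holds at $\hat\lambda$ itself.

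The second step is to invoke exchangeability, but only among the non-adversarial users. A fresh non-adversarial test user is exchangeable with the $Q-K$ non-adversarial calibration users. The adversarial terms are treated as fixed offsets: they enter the threshold selection, but not the exchangeable pool. This yields the usual conformal risk control bound
\[
\bbE[R(S_{\hat\lambda}(U_{nonadv},k))] \leq \bbE\Big[\tfrac{(Q-K)\hat{R}_{nonadv}(\hat\lambda) + R_{test}}{Q-K+1}\Big],
\]
with $R_{test} \leq 1$. This is proved via the oracle threshold $\lambda'$ computed on all $Q-K+1$ exchangeable points, using $\lambda' \le \hat\lambda$ and monotonicity. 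The right-hand side is then compared with $\frac{(Q-K)\hat{R}_{nonadv}+1}{Q+1}$.

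This comparison is the main obstacle, because the denominators $Q-K+1$ and $Q+1$ differ, so the bound does not follow directly. My first attempt would be to adopt the paper's implicit convention: treat all $Q+1$ points as exchangeable for the conformal step, so that the expected test risk is bounded by the calibration quantity $\frac{Q\hat R+1}{Q+1}$. Then I would substitute the decomposition and bound the non-adversarial contribution by the rearranged inequality above. If that fails, the fallback is to define a modified level $\alpha' = \alpha - \frac{K}{Q+1}r^{adv}_{\hat\lambda}$ and show that $\hat\lambda$ is feasible for the non-adversarial-only procedure at level $\alpha'$ up to the $(Q+1)$ normalisation. This is where the care goes.

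Finally, since risks are nonnegative, the bound can be clipped below at $0$, giving the $\max\{0,\cdot\}$. I would also note that $r^{adv}_\lambda$ must be understood at the selected $\hat\lambda$, treated as determined by the adversaries' reports.
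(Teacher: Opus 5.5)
\textbf{Verdict: there is a genuine gap.} The final step cannot be closed, because the inequality as stated is stronger than what can be proved, and the paper's proof has the same hole.

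Your decomposition and rearrangement are the paper's own. The paper normalises the non-adversarial sum by $Q$ and assumes every adversary has the same risk $r^{adv}$; your average $r^{adv}_\lambda$ avoids that assumption. The paper then finishes in one sentence: ``by exchangeability and the standard conformal risk control guarantee'' the rearranged inequality implies the bound. The denominator mismatch you isolate is exactly where that sentence fails, and neither of your repairs closes it.

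\emph{Pooling.} Pooling all $Q+1$ points is not available, because a non-adversarial test user is not exchangeable with the adversarial calibration points. Even if you pool and then condition, $(1-\beta)\,\mathbb{E}[R\mid\text{non-adv}]+\beta\,\mathbb{E}[R\mid\text{adv}]\le\alpha$ only gives $\mathbb{E}[R\mid\text{non-adv}]\le(\alpha-\beta\,\mathbb{E}[R\mid\text{adv}])/(1-\beta)$.

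\emph{Fallback.} Carry it out properly: take the oracle threshold $\lambda'\le\hat\lambda$ on the $Q-K+1$ exchangeable non-adversarial points with the adversarial risk curves held fixed, then use $r^{adv}_{\lambda'}\ge r^{adv}_{\hat\lambda}$ by monotonicity. This shows that $\hat\lambda$ is feasible for the non-adversarial-only procedure only at level $\frac{Q+1}{Q-K+1}\alpha'$, not at $\alpha'=\alpha-\frac{K}{Q+1}r^{adv}$. Your ``$(Q+1)$ normalisation'' is precisely the factor that cannot be absorbed. What the argument actually yields is
\[
\mathbb{E}\bigl[R(S_{\hat\lambda}(U_{nonadv},k))\bigr]\le\mathbb{E}\Bigl[\tfrac{Q+1}{Q-K+1}\bigl(\alpha-\tfrac{K}{Q+1}r^{adv}_{\hat\lambda}\bigr)\Bigr].
\]

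This factor is not an artefact of the proof; the stated inequality is too strong. Take $K=\beta Q$ adversaries with constant risk $c$, and non-adversarial users with i.i.d.\ continuous risk curves. Calibration forces $(1-\beta)\,\mathbb{E}[R\mid\text{non-adv}]+\beta c\approx\alpha$. So non-adversarial users end up at roughly $(\alpha-\beta c)/(1-\beta)$, strictly above the claimed $\alpha-\beta c$ whenever $0<\beta c<\alpha$.

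For instance, $\alpha=0.5$, $Q=1000$, $K=100$, $c=1$ gives about $0.444$ against a claimed $0.400$. The lower bound $\alpha''-\frac{2}{n+1}$ of conformal risk control~\cite{angelopoulos2024conformal} makes this rigorous. Apply it with $n=Q-K$ and effective level $\alpha''=\frac{(Q+1)\alpha-Kc}{Q-K+1}$; it gives at least $0.442$.

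The paper normalises $\hat R^{nonadv}$ by $Q$. Its rearranged inequality therefore controls the non-adversarial \emph{share} $(1-\beta)\,\mathbb{E}[R\mid\text{non-adv}]$ of the pooled risk, not the risk of a non-adversarial user.

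So you were right to distrust this step. The missing ingredient is not a cleverer normalisation but a corrected statement. One option is the $\frac{Q+1}{Q-K+1}$-inflated bound above, which still exhibits the budget $\frac{K}{Q+1}r^{adv}$ consumed by the adversaries. The other is to reinterpret the left-hand side as that non-adversarial share.
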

Intuitively, \cref{theorem:adv-impact} shows that each adversarial user consumes a portion of the available risk budget (\cf $\tfrac{K}{Q+1} r^{\text{adv}}_\lambda$), thereby pushing the actual risk reduction for unsuspecting users closer to zero. 
This effect scales \textit{linearly} with the number of adversarial users included in the calibration set.
As a consequence, when non-adversarial users request a moderate reduction in unwanted content (\eg a 20\% decrease), the system may be forced to enforce substantially \textit{more aggressive} filtering, potentially eliminating nearly all risky items, to satisfy the global risk constraint in the presence of adversarial feedback.
Finally, under the extreme assumption that adversarial users flag every item in their recommendation sets as unwanted, we obtain a tighter upper bound:
\begin{corollary}
    Let us assume that $R(S_\lambda(U_{adv}, k)) = 1$ for each adversarial user with the calibration set. 
    Then, the expected risk for non-adversarial users $U_{nonadv}$ is bounded from above:
    \begin{equation}
        \bbE[R(S_\lambda(U_{nonadv}, k))] \leq \max\{0, \alpha -\frac{K}{Q+1}\}
    \end{equation}
    \label{corollary:strong-adv}
\end{corollary}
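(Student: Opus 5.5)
The corollary follows from \cref{theorem:adv-impact} by specialisation, so the plan is to substitute the stronger assumption into that bound. I will not reprove the theorem; I will only check that its hypotheses still hold and evaluate the adversarial risk term under the new assumption.

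\textbf{Step 1: evaluate the adversarial risk term.} The theorem defines $r^{adv}_\lambda = \frac{1}{K}\sum_{u_{adv}\in\calK} R(S_\lambda(U=u_{adv},k))$. By assumption, every adversarial user in the calibration set has $R(S_\lambda(U_{adv},k))=1$. The sum therefore has $K$ terms, each equal to one, which gives $r^{adv}_\lambda = \frac{1}{K}\cdot K = 1$. This identity should hold for every $\lambda$, and in particular for $\hat\lambda$.

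\textbf{Step 2: substitute into the theorem's bound.} Plugging $r^{adv}_\lambda=1$ into \cref{theorem:adv-impact} gives directly
\[
\bbE[R(S_\lambda(U_{nonadv},k))]\le \max\Big\{0,\alpha-\tfrac{K}{Q+1}\Big\}.
\]

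\textbf{Step 3: check the hypotheses and the claim of tightness.} The only point needing care is whether the assumption is compatible with the theorem's setting. The risk function $R$ in \eqref{eqn:set-based-harmfulness} equals $1$ exactly when the adversarial user flags every item in $S_\lambda$, which requires $|S_\lambda|\ge1$. This is guaranteed by the repeated-safe-items padding, which keeps the recommendation set at size $k$.

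The assumption concerns only the adversarial users, so the exchangeability condition, which applies to the non-adversarial test users, is unaffected. The monotonicity conditions are also unaffected, because $R\equiv1$ is trivially monotone in $\lambda$.

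Finally, $R$ takes values in $[0,1]$, so $r^{adv}_\lambda\le1$ always holds. Hence the factor $\frac{K}{Q+1}r^{adv}_\lambda$ is maximised at $r^{adv}_\lambda=1$. This justifies calling the resulting bound the tightest, i.e.\ the worst case for non-adversarial users, that \cref{theorem:adv-impact} can give.

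I expect no real obstacle. The only subtlety is ensuring $R$ is well defined, meaning the recommendation set is nonempty, which the padding step handles.
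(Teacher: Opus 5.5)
Your argument is the same as the paper's: the corollary is \cref{theorem:adv-impact} with $r^{adv}_\lambda=1$ substituted, and the paper gives nothing beyond that substitution. One small correction to Step~3 concerns the padding. Repeated safe items are unflagged by construction, so once any of an adversary's items is filtered, that adversary's risk drops below $1$. Padding therefore makes ``$R=1$ for every $\lambda$'' an idealisation (the paper itself notes that adversarial risk decays as $\lambda$ grows) rather than securing its compatibility with the setting. This does not affect the conditional argument.

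More importantly, the substitution inherits an error from \cref{theorem:adv-impact}, and the displayed bound does not hold in general. In the paper's proof, $\hat R^{nonadv}(\lambda)$ sums over the $Q-K$ non-adversarial calibration users but divides by $Q$. The standard conformal risk control guarantee is then applied as though this were their mean. Take adversarial risk identically $1$ and let $\bar R^{nonadv}$ be the true mean over those $Q-K$ users. The calibration inequality is then equivalent to $\frac{Q-K}{Q-K+1}\bar R^{nonadv}(\lambda)+\frac{1}{Q-K+1}\le\frac{(Q+1)\alpha-K}{Q-K+1}$. So $\hat\lambda$ is exactly the conformal risk control threshold at level $\frac{(Q+1)\alpha-K}{Q-K+1}$ for the $Q-K$ calibration points exchangeable with a non-adversarial test user. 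That level, not $\alpha-\frac{K}{Q+1}$, is what the guarantee delivers, and it is strictly larger whenever $K>0$ and $(Q+1)\alpha>K$.

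The gap is not just slack in the proof. Suppose every non-adversarial user has the deterministic nonincreasing risk curve $1-\lambda$ on $\Lambda=[0,1]$. Then $\hat\lambda=1-\frac{(Q+1)\alpha-K-1}{Q-K}$, and a non-adversarial test user incurs risk $\frac{(Q+1)\alpha-K-1}{Q-K}$. For $Q=99$, $K=10$, $\alpha=\tfrac12$ this equals $\tfrac{39}{89}\approx 0.438>0.4=\alpha-\frac{K}{Q+1}$.

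A correct version of the corollary would read $\mathbb{E}[R(S_{\hat\lambda}(U_{nonadv},k))]\le\max\{0,\frac{(Q+1)\alpha-K}{Q-K+1}\}$. Your plan, like the paper's, establishes the stated bound only relative to a theorem that is itself mis-normalised.
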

In summary, as shown by \cref{theorem:adv-impact} and \cref{corollary:strong-adv}, the actual expected risk experienced by standard users decreases linearly based on the number of adversarial users.
While this may appear beneficial at first glance, in \cref{sec:evaluating_effects_of_collective} we demonstrate that such reductions can come at a substantial cost to recommendation quality.
Importantly, \cref{theorem:adv-impact} highlights a key consideration for coordinated collectives.
For a fixed collective size, the magnitude of the effect depends critically on the empirical adversarial risk $r^{\mathrm{adv}}_\lambda$ observed in the calibration set.
As the filtering threshold increases, the influence of adversarial users may diminish, for instance, once all items flagged by the collective are removed from the top-$k$ recommendations.
Consequently, an effective adversarial strategy must sustain a high empirical risk at calibration time, as the threshold increases, thereby preserving its impact on the system.

\section{On the Reporting Strategies}
\label{sec:reporting_strategies}

Building on the previous sections, we observe that the effectiveness of an adversarial collective hinges on its ability to adopt a \textit{reporting strategy} that maintains a high empirical risk during calibration.
We begin by noting that the filtering procedure defined in \cref{eqn:recommendation-set-filter} removes items in order of decreasing risk score $1-r(i,u)$, replacing those deemed most risky first.
As a result, an effective strategy for adversarial users is to report items that have \textit{low estimated} risk scores $r(i,u)$.
Such items are intuitively filtered only at higher thresholds $\lambda$, thereby forcing the system to adopt more conservative filtering to satisfy the global risk constraint.
We formalise this intuition through the following reporting strategy:
\begin{equation}
    \texttt{LowRisk}(\gamma) = \{ i \in \calI_u : \pi_r(i, u) \leq \lceil \gamma \cdot |\calI_u| \rceil \}
\end{equation}
where $\mathcal{I}_u$ denotes the set of items shown to user $u$, $\pi_r(i, u)$ induces an ordering of items based on their estimated risk scores $1-r(i,u)$, and $\gamma \in [0,1]$ specifies the fraction of items that an adversarial user reports to the platform.
While effective in principle, the \texttt{LowRisk}$(\gamma)$ strategy assumes white-box access to the recommender’s internal risk scores -- an assumption that is typically unrealistic in practice -- \gdt{thus representing the worst-case theoretical scenario.}
\gdt{Nevertheless, prior work has shown the feasibility of data-free model extraction attacks against sequential recommenders~\cite{yue2021blackboxattacks}, thus making this strategy potentially plausible for organized collectives.}
More generally, adversarial strategies should rely on \textit{minimal background knowledge}, while maximising their downstream impact, and also satisfying practical constraints such as remaining difficult to detect.
Because risk scores are hidden within the platform backend, we therefore turn to alternative strategies based on \textit{observable proxies} available through the user interface (\eg item popularity, ranking position, or metadata), which adversaries could plausibly exploit to approximate the \texttt{LowRisk} strategy.

\paragraph{Further reporting strategies.}
Motivated by documented cases of real-world algorithmic collective action \cite{sigg2025decline}, we consider three additional plausible adversarial reporting strategies targeting \textit{risk-controlling recommender systems}.
Given that the risk score of each item is unavailable to the platform's users, these alternative strategies must rely solely on information that the adversaries can readily garner from the platforms, respectively: (i) the number of \textit{likes} an items has received (\texttt{Likes}); (ii) an item’s rank position (\texttt{TopRank}); and (iii) the item's assigned \textit{tags} (\texttt{Tags}).
The intuition behind each strategy is described as follows.
For the \texttt{TopRank} strategy, the idea is that an item’s rank position plays a role in the risk-control procedure: given a risk score, higher-ranked items are more likely to be replaced than lower-ranked ones. Therefore, reporting higher-ranked items may delay their removal and require a larger filtering threshold, keeping $r_\lambda^{adv}$ high.
For the \texttt{Likes} strategy, item popularity, as proxied by the number of likes, may inversely correlate (imperfectly) with perceived unwantedness; therefore, the idea is that highly liked items are, in expectation, less likely to be flagged as unwanted and therefore tend to have lower risk scores.
Finally, for the \texttt{Tags} strategy, we assume that each item belongs to a group $g \in \mathcal{G}$, where $\mathcal{G}$ denotes a platform-defined taxonomy.\footnote{For simplicity, we assume a single categorical assignment per item. Nevertheless, we acknowledge potential challenges arising when considering \textit{intersectionality} aspects in recommender systems evaluation \cite{pmlr-v81-ekstrand18b}.}
For example, on a video-sharing platform, items may be categorised as \textit{entertainment} or \textit{cooking}. Some groups may be systematically more popular than others and, by the same reasoning, associated with lower expected risk. 
Furthermore, we consider this strategy to represent the scenario where a collective wants to target \textit{all the items} of a specific category with the intention of reducing the visibility of that category on the platform for all users.
Based on these observations, we define the following reporting strategies:
\begin{gather}
    \texttt{Likes}(\gamma) = \{ i \in \calI_u : \pi_{l}(i, u) \leq \lceil \gamma \cdot |\calI_u| \rceil \} \\
    \texttt{TopRanker}(\gamma) = \{ i \in \calI_u : \pi_{f}(i, u) \leq \lceil \gamma \cdot |\calI_u| \rceil \} \\
    \texttt{Tag}(g) = \{ i \in \calI_u \cap \calI_g\}
\end{gather}
where $\pi_{l}(i, u)$ and $\pi_{f}(i, u)$ denote the item orderings induced, respectively, by the number of likes and by the recommender’s ranking function, and $\mathcal{I}_g$ denotes the set of items belonging to group $g \in \mathcal{G}$.
As a benchmark, we consider a naive strategy where the collective of adversarial users simply decides to report a fraction of items chosen at random:
\begin{equation}
    \texttt{Random}(g) = \{ i \in \calI_u \} \quad \text{s.t.} \quad |\texttt{Random}(g)| = \lceil \gamma \cdot |\calI_u| \rceil
\end{equation}

\section{Evaluating the Effects of Collective Action}
\label{sec:evaluating_effects_of_collective}

In this section, we empirically examine the impact of coordinated adversarial users who seek to alter recommender behaviour by strategically flagging videos as \textit{unwanted}.
Our analysis leverages real-world user-item interaction data and a state-of-the-art recommender model.
Further, we released all the source code, raw data, and pre/post processing script on GitHub under a permissive license.\footnote{\url{https://github.com/geektoni/collective-action-recsys}}
Building on the theoretical insights developed in the previous sections, we address the following research questions:
\begin{itemize}
\item[\textbf{RQ1}] Which reporting strategy is most effective at sustaining high adversarial risk $r_\lambda^{adv}$ across all $\alpha \in [0,1]$?
\item[\textbf{RQ2}] How does coordinated adversarial reporting affect standard performance metrics?
\item[\textbf{RQ3}] To what extent can adversarial collectives influence the exposure of specific item groups through reporting?
\end{itemize}

\paragraph{Experimental details.}
We conduct our experiments using the \textit{KuaiRand} dataset \cite{gao2022kuairand}, which consists of real user-item interactions collected from the large-scale video-sharing platform \textit{Kuaishou}.\footnote{\url{https://www.kuaishou.com/}. Last accessed: 13/01/2026.}
To the best of our knowledge, KuaiRand is the only publicly available dataset that includes realistic \textit{``Not Interested''} feedback provided by users in response to recommended videos.
Our experimental setup follows the two-stage architecture proposed by \citet{detoni2025you}, using their pretrained LightGCL\footnote{We focus on LightGCL for risk prediction, as it achieved the strongest performance in the original study. By strongest performance, we mean that LightGCL \gdt{uses} \textit{less} repeated safe items \gdt{than comparable models} to achieve the same risk level~\cite[Figure 4]{detoni2025you}.} recommender model \cite{caisimple}.
We adopt the same calibration and test splits as in the original work to ensure comparability.
From the calibration set, we randomly sample a fraction $\beta \in (0.001, 0.1)$ of users to form an adversarial collective.
Users in this group are assumed to coordinate and follow one of the reporting strategies described in \cref{sec:reporting_strategies}.
For the \texttt{LowRisk}, \texttt{Likes}, \texttt{TopRanker}, and \texttt{Random} strategies, we additionally specify a reporting rate $\gamma \in [0,1]$, representing the fraction of videos each adversarial user flags as ``\textit{Not Interested}.''
For the \texttt{Tag} strategy, we choose the top-3 most frequent tags in the \textit{KuaiRand} dataset $g \in \{39, 34, 67\}$. 
In practice, we restrict $\gamma$ to $\{0.001, 0.01, 0.1\}$, reflecting the assumption that users who report an excessively large fraction of items would likely be detected and removed by the platform~\cite{arora2023detectingharm}.

\begin{figure}
    \centering
    \begin{subfigure}[t]{0.7\textwidth}
        \centering
        \includegraphics[width=\linewidth]{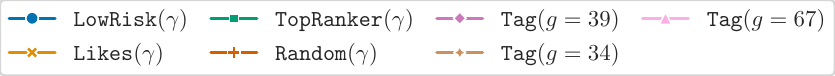}
        \vspace{0.1em}
    \end{subfigure}
    \begin{subfigure}[t]{0.32\textwidth}
         \centering
            \includegraphics[width=\linewidth]{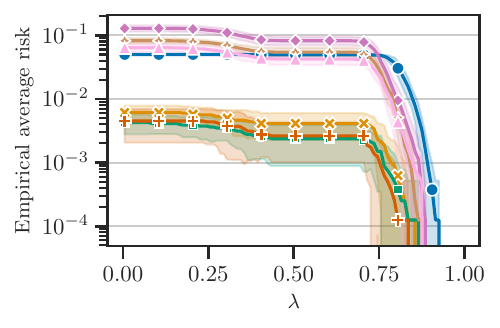}
          \caption{$\gamma=0.001$}
          \label{fig:reporting_strategies_gamma_0.001}
     \end{subfigure}
    \begin{subfigure}[t]{0.32\textwidth}
        \centering
        \includegraphics[width=\linewidth]{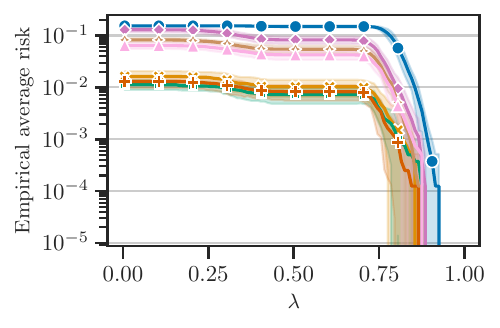}
        \caption{$\gamma=0.01$}
     \end{subfigure}
     \begin{subfigure}[t]{0.32\textwidth}
         \centering
            \includegraphics[width=\linewidth]{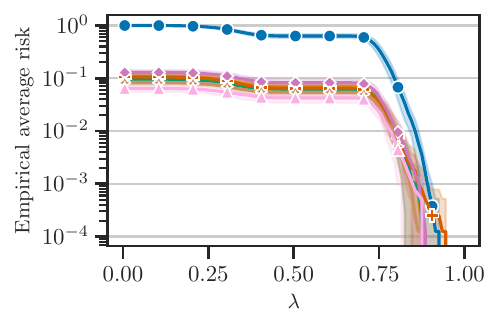}
          \caption{$\gamma=0.1$}
          \label{fig:reporting_strategies_gamma_0.1}
     \end{subfigure}
     \caption{
        Expected empirical risk of adversarial users at calibration time as a function of the filtering threshold $\lambda \in [0,1]$, for reporting rates $\gamma \in \{0.001, 0.01, 0.1\}$ and different reporting strategies.
        The collective size is fixed to $\beta = 0.01$. 
        Shaded areas indicate one standard deviation over 10 runs. 
     }
     \label{fig:reporting_strategies}
\end{figure}

\paragraph{Evaluation metrics.}
We evaluate recommendation quality using nDCG@k and Recall@k, fixing $k = 20$ throughout, as is common in the literature \cite{detoni2025you, caisimple}.
For a given risk level $\alpha \in [0,1]$, and for each reporting strategy and metric $m \in \{\text{nDCG@}k, \text{Recall@}k\}$ we quantify the impact of adversarial behaviour through the normalised performance change at test time:
\begin{equation}
\text{Reduction}(\beta) = \frac{1}{\beta}\bigl(m(0) - m(\beta)\bigr)
\label{eqn:reduction-ATE}
\end{equation}
where $m(0)$ denotes the performance of the unperturbed risk-controlling recommender system when enforcing risk level $\alpha$ (cf.\ Eq.~\ref{eqn:risk-controlling-set}), and $m(\beta)$ denotes performance based on the fraction of adversarial users within the calibration set $\beta = \frac{K}{Q}$.
\gdt{\cref{eqn:reduction-ATE} yields a dimensionless quantity, and its value is relative to the reduction theoretically expected for a collective of size $\beta$.}
A reduction of 0 indicates no observable effect, whereas a reduction of 1 corresponds to an effect proportional to the size of the collective. Values greater than 1 indicate a \textit{disproportionate} impact.
To address RQ3, we additionally measure group-level exposure, defined as the probability that an item belonging to group $g$ appears in the top-$k$ recommendations:
\begin{equation}
\mathrm{Exposure}(k, \beta) = \mathbb{E}\bigl[\mathds{1}\{ I \in S_{\lambda^{*}_{\alpha,\beta}}(U, k) \} \mid G = g \bigr]
\label{eqn:expected-group-exposure}
\end{equation}
where $\lambda^{*}_{\alpha,\beta} \in \Lambda$ denotes the optimal threshold for risk level $\alpha$ in the presence of an adversarial collective of size $\beta$.

\begin{figure}
    \centering
    \begin{subfigure}[t]{0.7\textwidth}
        \centering
        \includegraphics[width=\linewidth]{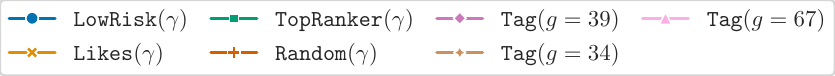}
        \vspace{0.1em}
     \end{subfigure}
     \begin{subfigure}[t]{0.43\textwidth}
        \centering
        \includegraphics[width=\linewidth]{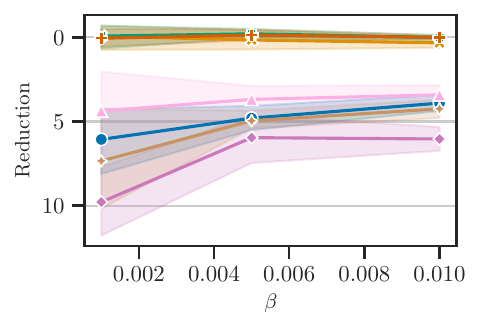}
        \caption{nDCG @ 20 ($\gamma=0.001$)}
        \label{fig:ndcg_0.001}
    \end{subfigure}
    \begin{subfigure}[t]{0.43\textwidth}
        \centering
        \includegraphics[width=\linewidth]{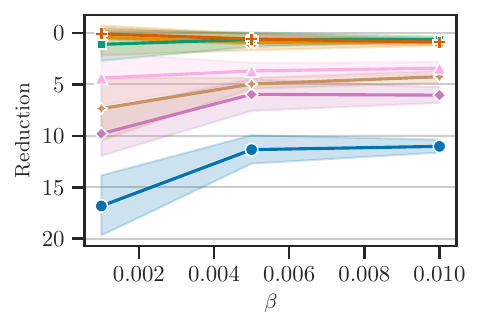}
        \caption{nDCG @ 20 ($\gamma=0.01$)}
    \end{subfigure}
    \begin{subfigure}[t]{0.43\textwidth}
        \centering
        \includegraphics[width=\linewidth]{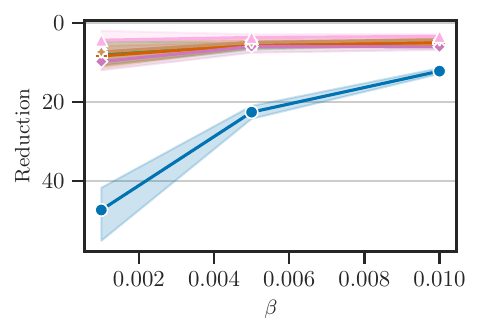}
        \caption{nDCG @ 20 ($\gamma=0.1$)}
        \label{fig:ndcg_0.1}
    \end{subfigure}
    \begin{subfigure}[t]{0.43\textwidth}
        \centering
        \includegraphics[width=\linewidth]{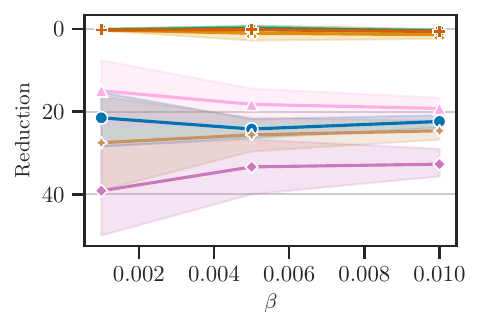}
        \caption{Recall @ 20 ($\gamma=0.001$)}
        \label{fig:recall_0.001}
     \end{subfigure}
    \begin{subfigure}[t]{0.43\textwidth}
        \centering
        \includegraphics[width=\linewidth]{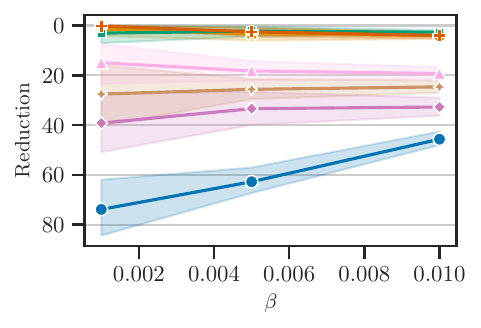}
        \caption{Recall @ 20 ($\gamma=0.01$)}
     \end{subfigure}
    \begin{subfigure}[t]{0.43\textwidth}
        \centering
        \includegraphics[width=\linewidth]{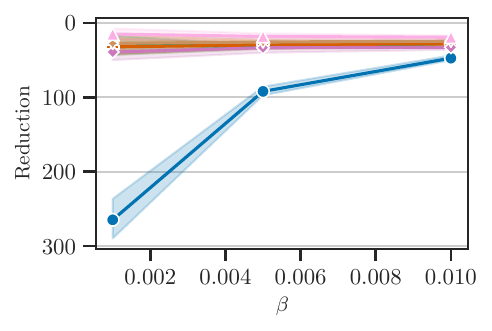}
        \caption{Recall @ 20 ($\gamma=0.1$)}
        \label{fig:recall_0.1}
     \end{subfigure}
     \caption{
        Expected performance reduction as a function of the fraction of adversarial users in the calibration set, $\beta \in [0.001, 0.1]$, for different reporting rates $\gamma \in \{0.001, 0.01, 0.1\}$ and reporting strategies.
        A reduction between 0 and 1 indicates an effect proportional to the size of the collective, whereas values greater than 1 indicate a disproportionate impact.
        Shaded areas denote confidence intervals obtained via bootstrapping over 10 runs.
     }
     \label{fig:performance_reduction}
\end{figure}

\paragraph{\textbf{(RQ1) Effective adversarial strategies should target low-risk items.}}
For a fixed collective size $\beta$, \cref{fig:reporting_strategies} shows that all considered strategies can maintain a high expected adversarial risk at calibration time $r_\lambda^{adv}$ up to relatively large filtering thresholds ($\lambda \approx 0.75$), across all reporting rates $\gamma \in \{0.001, 0.01, 0.1\}$.
Among them, the \texttt{LowRisk} strategy consistently achieves the highest empirical risk, even when adversarial users report as little as $0.1\%$ of the items they encounter.
This outcome is expected: by targeting low-risk items, adversaries ensure that these items are removed only at higher thresholds, forcing the system to adopt more conservative filtering.
The \texttt{Tags} strategy also performs competitively, where its effectiveness is comparable to \texttt{LowRisk} with a low reporting budget ($\gamma = 0.001$).
However, \cref{fig:fraction_reported_items_per_strategy} shows that \texttt{Tags} requires reporting roughly two orders of magnitude more items to achieve a similar empirical risk, highlighting the importance of selecting the \emph{right} items rather than merely increasing reporting volume.
By contrast, the \texttt{Likes} and \texttt{TopRanker} strategies, despite reporting the same number of items as \texttt{LowRisk}, result in approximately an expected risk that is an order of magnitude lower, performing similarly to random reporting.
We attribute this to two factors:
first, in the two-stage architecture, items ranked highly by the recommender are not necessarily those with the lowest estimated risk and may therefore be filtered early as $\gamma$ increases, limiting the effectiveness of \texttt{TopRanker}; second, item popularity is only an imperfect proxy for unwantedness; in \textit{Kuaishou}, likes might not reliably reflect perceived harm, which reduces the effectiveness of \texttt{Likes}.
Nonetheless, \cref{fig:reporting_strategies_gamma_0.001} shows that at $\gamma = 0.001$ (a reporting rate close to what is observed in practice) and $\gamma = 0.01$, the \texttt{Likes} strategy improves over \texttt{TopRanker} and \texttt{Random}.
We argue that the relatively modest performance of \texttt{Likes} in our setting is likely amplified by the fact that, in the implementation by \citet{detoni2025you}, risk scores are learned solely from sparse user-item interactions without incorporating auxiliary features.
In deployments where richer signals are available, popularity-based proxies such as likes may therefore allow adversarial collectives to more closely approximate the whitebox \texttt{LowRisk} strategy.
Finally, \cref{fig:reporting_strategies_gamma_0.1} shows that when the reporting rate increases to $\gamma = 0.1$, all strategies converge toward the performance of \texttt{Tags}.
This result highlights that, without precise information about item risk, reporting a sufficiently large number of items increases the likelihood that flagged items remain in the top-$k$ recommendations during calibration, thereby sustaining adversarial impact.

\begin{figure}
    \centering
    \begin{subfigure}[t]{0.7\textwidth}
        \centering
            \includegraphics[width=\linewidth]{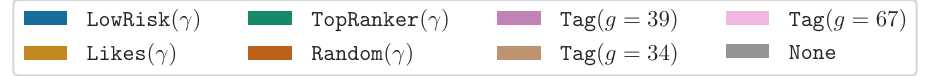}
            \vspace{0.1em}
    \end{subfigure}
    \begin{subfigure}[t]{0.49\textwidth}
         \centering
            \includegraphics[width=\linewidth]{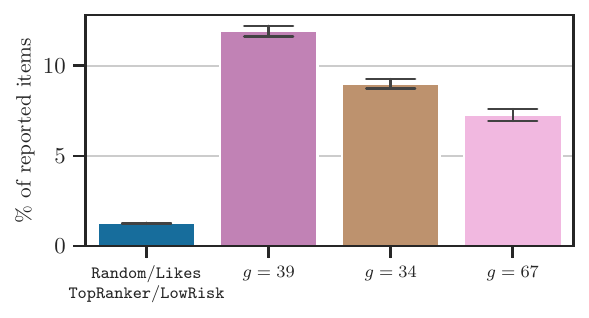}
            \caption{Frequency of reported items.}
            \label{fig:fraction_reported_items_per_strategy}
     \end{subfigure}
     \begin{subfigure}[t]{0.49\textwidth}
         \centering
            \includegraphics[width=\linewidth]{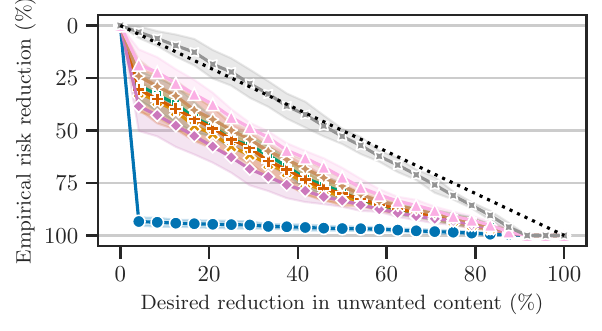}
          \caption{Expected risk for standard users.}
          \label{fig:expected_harmfulness_standard_users}
     \end{subfigure}
    \caption{
    Relationship between reporting intensity and risk reduction.
    A small fraction of carefully selected reported items in the calibration set (\cref{fig:fraction_reported_items_per_strategy}) can induce a sharp reduction in expected unwanted content for non-adversarial users relative to the baseline (\texttt{None} in \cref{fig:expected_harmfulness_standard_users}).
    Across all panels, the collective size is fixed to $\beta = 0.01$ and the reporting rate to $\gamma = 0.1$.
    We report the standard deviation over 10 runs as a shaded area or error bars.
    }
\end{figure}

\paragraph{\textbf{(RQ2) Few adversaries can cause a disproportionate impact on performance.}}
In this experiment, we fix a target reduction in expected risk of $25\%$ and evaluate the resulting \textit{performance reduction} (\cf \cref{eqn:reduction-ATE}) on nDCG@k and Recall@k for non-adversarial users at test time.
As shown in \cref{fig:performance_reduction}, even very small adversarial collectives can induce disproportionate degradations in recommendation quality when employing effective strategies.
In particular, \cref{fig:ndcg_0.001,fig:recall_0.001} show that, under a low reporting budget ($\gamma = 0.001$), the \texttt{LowRisk} strategy yields reductions of up to 5 and 20 in nDCG@20 and Recall@20, respectively, with a collective comprising only $0.1\%$ of calibration users (approx. four users).
When the reporting budget increases to $\gamma = 0.1$, the same strategy produces substantially larger effects, with reductions reaching up to 40 for nDCG@20 and 300 for Recall@20.
\gdt{As a concrete example, for $\beta=0.002$ and $\gamma=0.1$, we observe nDCG@20 reductions of approximately $10\%$ and $2\%$ under the \texttt{LowRisk} and \texttt{Likes} strategies, respectively. 
For Recall@20, the corresponding reductions are substantially larger, at $60\%$ and $10\%$.
While these changes may appear modest -- particularly for nDCG -- prior work shows that even small variations in recommender performance can translate into significant business impact~\cite{jannach2019measuring} (\eg a 2--3\% increase in click-through rate on eBay was associated with a 6\% increase in revenue~\cite{brovman2016optimizing}).}
\gdt{As detailed in \cref{sec:reporting_strategies}, the \texttt{LowRisk} strategy highlights a potential upper-bound in performance degradation across diverse collective sizes.}
By contrast, for $\gamma < 0.01$, the remaining strategies struggle to generate appreciable performance losses.
However, once adversarial users report $10\%$ of the items they encounter, all strategies lead to noticeable degradation, with reductions of up to 10 and 40 for nDCG@20 and Recall@20, respectively.

Interestingly, as shown in \cref{fig:expected_harmfulness_standard_users}, non-adversarial users simultaneously experience a sharp reduction in expected risk relative to the baseline (grey line in \cref{fig:expected_harmfulness_standard_users}), depending on the adversarial strategy adopted.
This highlights a key mechanism underlying the observed performance loss.
Risk-controlling recommender systems enforce risk constraints by filtering risky items and \textit{replacing them with safe alternatives} -- often previously consumed items that have not been flagged as unwanted \cite[Property~1]{detoni2025you}.
\gdt{This mechanism is consistent with real-world recommender ecosystems, where repeated consumption is common across domains and platforms routinely resurface previously consumed items or unfinished content~\cite{anderson2014dynamics,li2023repetition}.}
While effective for risk reduction, excessive reliance on repeated items undermines novelty and serendipity, thereby degrading recommendation quality.
Consistent with this mechanism, \cref{fig:fraction_repeated_items_per_strategy} shows that adversarial strategies can dramatically increase content repetition; under the \texttt{LowRisk} strategy, up to $80\%$ of recommended items have been seen previously by users, leading to sustained performance degradation.
Overall, these results suggest that adversarial effectiveness depends critically on the relative strength of adversarial signals compared to those of non-adversarial users.
Given that most users exhibit extremely low expected risk, even a small collective that injects a stronger, coordinated signal can exert an outsized influence, an observation consistent with prior work on algorithmic collective action \cite{baumann2024algorithmic,hardt2023algorithmic}.

\begin{figure}
    \centering
    \begin{subfigure}[t]{0.7\textwidth}
        \centering
            \includegraphics[width=\linewidth]{figures/02/classic_legend.pdf}
            \vspace{0.1em}
    \end{subfigure}
    \begin{subfigure}[t]{\textwidth}
        \centering
            \includegraphics[width=0.55\linewidth]{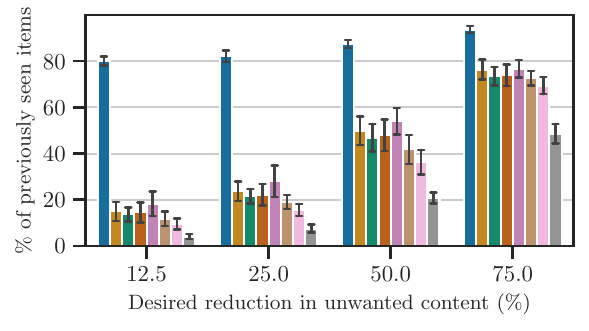}
    \end{subfigure}
    \caption{Adversarial strategies substantially increase content repetition in recommendations, leading to degraded performance -- up to $80\%$ repeated items under the \texttt{LowRisk} strategy.
    The collective size is fixed to $\beta = 0.01$ and the reporting rate to $\gamma = 0.1$.
    We report the standard deviation over 10 runs as error bars.}
    \label{fig:fraction_repeated_items_per_strategy}
\end{figure}

\paragraph{\textbf{(RQ3) Adversarial reporting cannot selectively alter group exposure.}}
Building on the previous findings, we investigate whether a coordinated adversarial collective can manipulate the exposure of specific item groups through targeted reporting, as implemented by the \texttt{Tag} strategy.
To this end, we focus on group $g = 34$ and compare its exposure under two adversarial collectives: one using the \texttt{Random}$(\gamma)$ strategy and another using the targeted \texttt{Tag}$(g = 34)$ strategy.
To ensure a fair comparison, we set $\gamma$ for \texttt{Random} to match the fraction of items reported under \texttt{Tag}$(g = 34)$ (approx. 10\% in \cref{fig:fraction_reported_items_per_strategy}).
Further, we evaluate the difference in exposure across varying collective sizes $\beta \in \{0.001, 0.005, 0.01\}$.
As shown in \cref{fig:difference_between_tags}, we observe no appreciable difference in exposure between the two strategies at any collective size.
This result follows directly from the design of the risk metric introduced by \citet{detoni2025you}, which is agnostic to item group membership and penalizes items solely based on their estimated risk score.
\gdt{The risk metric (\cref{eqn:set-based-harmfulness}) depends only on the number of flagged items within each top-$k$ list and is agnostic to which specific items are flagged. The same holds for the threshold calibration procedure.}
Consequently, adversaries cannot selectively reduce the exposure of \textit{any} target group simply by reporting items from that group more frequently.
\gdt{As further discussed in \cref{app:disparate-impact}, disparities in exposure may still arise, but only due to pre-existing biases in the underlying recommender model, and would diminish under an optimal risk predictor.
Importantly, even under such ideal conditions, \textit{coordinated users can still influence recommendation outcomes}: in risk-controlling recommender systems, the threshold $\lambda$ is calibrated from user feedback, independently of the scoring model, as conformal risk control guarantees are model-agnostic.}
Overall, these findings indicate that risk-controlling recommender systems exhibit a \textit{degree of robustness} to adversarial attempts at manipulating group-level exposure.

\begin{figure}[t]
    \centering
    \includegraphics[width=0.55\linewidth]{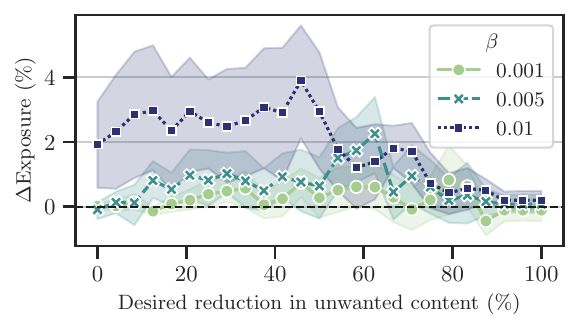}
     \caption{
     Difference in top-$k$ exposure ($k = 20$) for group $g = 34$ in \textit{Kuaishou}.
    Comparing \texttt{Random}$(\gamma)$ and \texttt{Tag}$(g = 34)$, both strategies flag approximately $10\%$ of items yet induce similar exposure reductions for non-adversarial users.
    Results are shown for collective sizes $\beta \in \{0.001, 0.005, 0.01\}$, with confidence intervals obtained via bootstrapping over 10 runs.
    }
    \label{fig:difference_between_tags}
\end{figure}

\section{Mitigating the Effects of Adversarial Collective Action}
\label{sec:mitigating-coll-action}
Our findings \gdt{in \cref{sec:evaluating_effects_of_collective}} suggest that \textit{risk-controlling recommender systems} are neither inherently fragile nor universally robust. Their vulnerability depends critically on the level at which guarantees are enforced and on how replacement mechanisms interact with user behaviour.
We argue that a more promising design direction is to move away from a \textit{single global threshold} toward \emph{user-level risk control}.
Concretely, instead of computing a single threshold $\lambda \in \Lambda$ (\cf \cref{eqn:recommendation-set-filter}) using a calibration set drawn from the entire population, the system could estimate personalised thresholds $\lambda_u$ for each user $u \in \mathcal{U}$.
\gdt{More formally, for a given user $u \in \calU$, let us assume we have access to a held-out calibration set $\calQ_u = \{(u,i,h)_j\}_{j=1}^{Q_{u}}$ of user-item interactions.
Then, the user-level threshold $\hat{\lambda}_u$ for a given risk $\alpha \in [0,1]$ can be defined as follow:
\begin{equation}
    \hat{\lambda}_u = \inf \left\{ \lambda : \frac{Q_u}{Q_u+1}\hat{R}_u(\lambda) + \frac{1}{Q_u+1} \leq \alpha \right\} \quad \text{s.t.} \quad \hat{R}_u(\lambda) = \frac{1}{Q_u}\sum_{j=1}^{Q_u} R(S_\lambda(U = u, k))
    \label{eqn:personalized-treshold}
\end{equation}
where $\hat{R}_u(\lambda)$ denotes the per-user risk score, computed only over the interactions made by the given user.}
By calibrating risk guarantees at the individual level, the effects of coordinated adversarial behaviour can be localized, preventing them from propagating system-wide.
Under such a design, the reporting strategies described in \cref{sec:reporting_strategies} \textit{would lose their effectiveness}, as reported items would influence only the reporting user’s personalised threshold.
\gdt{In this section, we empirically evaluate this approach by answering the following research question:} 
\begin{itemize}
    \item[\textbf{RQ4}] \gdt{Can we mitigate the adversarial collectives' influence on the recommender by employing user-level thresholds?}
\end{itemize}
\paragraph{Experimental details.} \gdt{In our experiment, we adopt the same empirical setup and evaluation protocol described in \cref{sec:evaluating_effects_of_collective}.
We focus on the \texttt{LowRisk} strategy, as it exhibits the (theoretical) largest impact on recommender performance. 
We compare two approaches: (i) a personalized calibration of the risk-controlling threshold (\cref{eqn:personalized-treshold}), and (ii) the standard population-level calibration used in \cref{sec:evaluating_effects_of_collective}.
To quantify the differences, we measure the performance gap $\Delta m$ for $m \in \{\text{nDCG@}k, \text{Recall@}k\}$ between the user-level and population-level risk-controlling recommender systems, where $\Delta > 0$ indicates an improvement of the user-level approach. 
Additionally, as in RQ3, we report the difference in the fraction of previously seen items appearing in the top-$k$ recommendations under the two calibration strategies. 
In this case, $\Delta < 0$ indicates that the user-level approach requires fewer previously seen items to achieve the same level of risk control.
All results are computed over non-adversarial users in the test set.}

\begin{figure}
    \centering
    \begin{subfigure}[t]{0.31\textwidth}
         \centering
            \includegraphics[width=\linewidth]{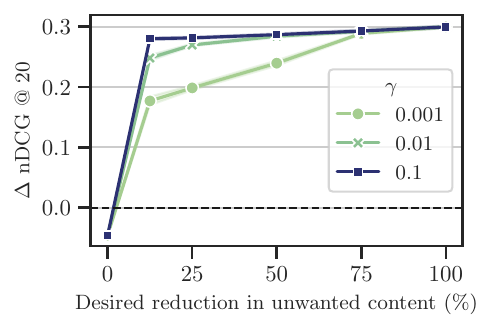}
          \caption{nDCG @ 20}
     \end{subfigure}
    \begin{subfigure}[t]{0.31\textwidth}
        \centering
        \includegraphics[width=\linewidth]{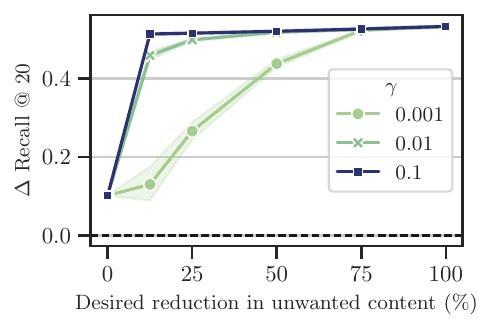}
        \caption{Recall @ 20}
     \end{subfigure}
     \begin{subfigure}[t]{0.32\textwidth}
         \centering
            \includegraphics[width=\linewidth]{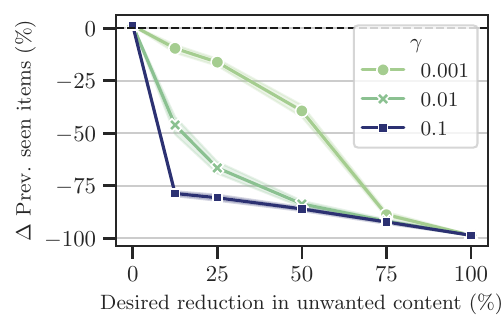}
          \caption{Previously seen items}
          \label{fig:previously-seen-items-difference}
     \end{subfigure}
     \caption{
        \gdt{Difference ($\Delta$) in performance between user-level and population-level thresholds in risk-controlling recommender systems at test time, under the \texttt{LowRisk} strategy.
        For nDCG @ 20 and Recall @ 20, a higher difference is better.
        For the previously seen items, a lower difference is better. 
        Results are shown for reporting rates $\gamma \in \{0.001, 0.01, 0.1\}$, with confidence intervals obtained via bootstrapping over 10 runs.
        The collective size is fixed to $\beta = 0.01$.}
     }
     \label{fig:effectiveness-strategy-per-user}
\end{figure}

\paragraph{\textbf{(RQ4) Personalized thresholds mitigate the effects of algorithmic collectives. }}
\gdt{Under the strongest attack strategy (\texttt{LowRisk}), adopting personalized thresholds as in \cref{eqn:personalized-treshold} yields consistent improvements over the standard global threshold. 
In particular, we observe gains of $+0.3$ nDCG @ 20 and $+0.4$ Recall @ 20, while reducing the number of previously seen (\ie repeated) items in the recommendation lists by approximately $99\%$ (\cref{fig:previously-seen-items-difference}).
\cref{fig:effectiveness-strategy-per-user} further illustrates these results. 
Importantly, these improvements are achieved without sacrificing the desired risk-control guarantees: both approaches satisfy the target bounds, although the global threshold tends to be more conservative.
The observed gains can be explained by the fact that personalized calibration prevents coordinated groups from influencing the calibration process.
Unlike the global threshold setting - where adversarial collectives can distort the shared calibration set (\cf \cref{sec:evaluating_effects_of_collective}) - each user now relies on an \textit{individual calibration set}, effectively isolating them from collective manipulation.
Moreover, personalization adapts the intervention to heterogeneous user behavior. 
Since most users exhibit low reporting rates, even under stringent reduction targets $\alpha \in [0,1]$, only a small number of items need to be filtered or replaced (on average $\approx 1.5$ items per top-$k$ list). 
This leads to higher-quality recommendations while still enforcing risk control for any $\alpha$. 
Consistently, we find that the benefits of personalization increase with both the desired reduction in unwanted content and the strength of the collective attack, indicating that user-level calibration is particularly effective in mitigating coordinated adversarial behavior.}

\section{Discussion}
\label{sec:discussion}

We now discuss some limitations of the present research, which open up further avenues for future work. 
First, our empirical evaluation necessarily relies on an \textit{offline simulation} with \textit{simulated collective of users} that may not reflect the complex interaction of a live system.
A natural next step is to validate these findings through online studies, similar to prior work using \textit{browser extensions} or \textit{participatory infrastructures}~\cite{mendler2024engine,hoang2021tournesol}, which would enable the collection of real user interactions with a functioning risk-controlling recommender system.
Second, the effectiveness of the proposed adversarial strategies depends on specific implementation choices and assumptions underlying the risk-controlling mechanism.
For instance, although \texttt{LowRisk} emerges as the most effective strategy in our experiments (\cf \cref{fig:performance_reduction}), it may be impractical to deploy in real-world settings due to limited access to internal risk scores.
That said, \textit{model extraction} and \textit{black-box inference techniques}~\cite{yue2021blackboxattacks} could, in principle, be used to approximate these scores, potentially making such strategies feasible.
More broadly, the impact of all strategies may differ in deployment, where feedback delays, interface design, and heterogeneous user behaviour could substantially alter outcomes.

In particular, users may employ the \textit{``Not Interested''} signal to express dissatisfaction with repetitive or low-novelty content rather than genuine disinterest or perceived harm~\cite{hong2025social}.
For example, prior work shows that tolerance to harmful content varies across cultural and social contexts \cite{jiang2021understanding}.
Given the current design of risk-controlling recommender systems and our findings on content repetition (\cf \cref{fig:fraction_repeated_items_per_strategy,fig:previously-seen-items-difference}), such well-intentioned feedback could inadvertently trigger feedback loops that push the system toward increasingly conservative filtering.
As a result, a small group of high-reporting users, acting in good faith, may disproportionately influence the system-wide risk threshold, effectively driving the recommender toward more conservative behaviour, akin to the adversarial effects analysed in \cref{sec:evaluating_effects_of_collective}.
\gdt{However, as we have shown empirically in \cref{sec:mitigating-coll-action}, employing user-level thresholds might represent a promising solution towards mitigating these phenomena.}
Beyond robustness, user-level risk control also offers a principled approach to addressing the \textit{cold-start} problem~\cite{park2009pairwise}.
New users could be initialised with predefined risk tolerances that get subsequently refined as feedback accumulates.
Such personalisation could further leverage existing social or relational structures on the platform, for example, by assigning new users the risk tolerance of socially connected or behaviourally-similar users, aligning risk control with established mechanisms of trust and similarity, and potentially fostering \textit{user altruism} \cite{fedorova2025user}.
Importantly, in this way, risk control ceases to be a purely \textit{platform-centric safeguard} to become a more \textit{user-centric approach} that empowers individuals to directly shape their recommendation experience.

\section{Conclusion}
This study investigates the vulnerability of \textit{risk-controlling recommender systems} to collectives of adversarial users. 
While conformal risk control provides principled, model-agnostic guarantees to bound unwanted content, we demonstrate that coordinated groups can exploit these mechanisms to their advantage.
Theoretically, we proved that adversarial users consume a portion of the global risk budget, forcing the system to adopt more conservative filtering thresholds for the entire population.
Empirically, using real-world data from a large video-sharing platform, we demonstrated that a collective of just $1\%$ of users can degrade recommendation quality for non-adversarial users by up to $20\%$.
We examined several realistic attack strategies that coordinated adversaries could adopt, including targeting top-ranked (\texttt{TopRanker}) and most-liked (\texttt{Likes}) items, and quantified their effects on widely used platform metrics such as nDCG@20 and Recall@20.
Targeting items with low estimated risk scores (\texttt{LowRisk}) yielded the strongest and most sustained adversarial impact; however, this strategy would require white-box access to the system, limiting its practical feasibility.
\gdt{Furthermore, our results reveal that these systems do not allow adversaries to selectively suppress specific item groups.}
\gdt{We also show that calibrating risk-controlling recommender systems with user-level thresholds can empirically mitigate the impact of adversarial collectives, leading to more stable performance as the collective size increases.}
Our findings demonstrate that global-level risk control is susceptible to coordinated manipulation, whereas user-level approaches offer a promising path forward to mitigate the impact of adversarial behaviour.

\section*{Acknowledgements}
The work of GDT and BL was partially supported by the following projects: Horizon Europe Programme, grants \#101120237-ELIAS and \#101120763-TANGO.
Funded by the European Union. Views and opinions expressed are however those of the author(s) only and do not necessarily reflect those of the European Union or the European Health and Digital Executive Agency (HaDEA). Neither the European Union nor the granting authority can be held responsible for them.

\bibliographystyle{ACM-Reference-Format}
\bibliography{references}

\appendix
\section{Proofs}
\label{app:proofs}

\subsection{Proof for \cref{theorem:adv-impact}}

\begin{proof}
Consider a held-out calibration set $\calQ = \{(u,i,h)_j\}_{j=1}^Q$.
Let us assume that there are $\calK \subset \calQ$ users that are behaving strategically.
Given $\lambda \in \Lambda$, let us denote the following expected risks for both standard and adversarial users on the calibration set:
\begin{align}
    \hat{R}(\lambda) &= \frac{1}{|\calQ|}\sum_{u' \in \calQ} R(S_\lambda(U=u', k)) \\
    &= \frac{1}{|\calQ|} \left[\sum_{u' \in \calQ\backslash\calK} R(S_\lambda(U=u', k)) + \sum_{u'' \in \calK} R(S_\lambda(U=u'', k)) \right] \\
    &= \frac{1}{|\calQ|}\sum_{u' \in \calQ\backslash\calK} R(S_\lambda(U=u', k)) +  \frac{1}{|\calQ|}\sum_{u'' \in \calK} R(S_\lambda(U=u'', k))
\end{align}
Let us denote $\hat{R}^{nonadv}(\lambda) = \frac{1}{|\calQ|}\sum_{u' \in \calQ\backslash\calK} R(S_\lambda(U=u', k))$.
Further, let us assume that all adversaries have the same risk $R(S_\lambda(U=u, k)) = r^{adv}$ for all $u \in \calK$. 
Thus, leading to the following equality:
\begin{equation}
    \hat{R}(\lambda) = \hat{R}^{nonadv}(\lambda) + \frac{|\calK|}{|\calQ|}r^{adv}
\end{equation}
Given a target level $\alpha \in [0,1]$, let us consider $\hat{\lambda} \in \Lambda$ choosen as $\hat{\lambda} = \inf \left\{ \lambda : \frac{Q}{Q+1}\hat{R}(\lambda) + \frac{1}{Q+1} \leq \alpha \right\}$, thus satisfying the risk control guarantees \cite{angelopoulos2024conformal}.
Let us denote with $Q = |\calQ|$ and $K = |\calK|$. 
Then, we can rewrite the selection threshold to distinguish between standard and adversarial users as follows:
\begin{align}
    &\frac{Q}{Q+1}\hat{R}(\lambda) + \frac{1}{Q+1} \leq \alpha \\
    &\frac{Q}{Q+1}\left(\hat{R}^{nonadv}(\lambda) + \frac{K}{Q}r^{adv} \right) + \frac{1}{Q+1} \leq \alpha \\
    & \frac{Q}{Q+1}\hat{R}^{nonadv}(\lambda) + \frac{\cancel{Q}}{Q+1} \frac{K}{\cancel{Q}}r^{adv} + \frac{1}{Q+1} \leq \alpha \\
    & \frac{Q}{Q+1}\hat{R}^{nonadv}(\lambda) + \frac{1}{Q+1} \leq \alpha - \frac{K}{Q+1}r^{adv} \label{eq:nonadv-bound}
\end{align}
In practice, the $K$ adversarial users ``consumes'' $\frac{K}{Q+1}$ units of risk budget, leaving only $\alpha - \frac{K}{Q+1}r^{adv}$ for the original population.
Thus, we can denote the quantity $\alpha_{\mathrm{eff}} = \alpha - \frac{K}{Q+1}r^{adv}$ as the \textit{effective risk reduction} for standard users. 

Since risks are nonnegative, we may take $\max\{0,\alpha_{\mathrm{eff}}\}$ as a valid target.
By exchangeability and the standard conformal risk control guarantee, inequality
\eqref{eq:nonadv-bound} implies that for a non-adversarial test user $U_{\mathrm{nonad}}$ we have
\begin{equation}
    \bbE[R(S_\lambda(U_{nonadv}, k))] \leq \max\{0, \alpha - \frac{K}{Q+1}r^{adv}_{\lambda}\}
\end{equation}
This proves Theorem~1.
\end{proof}

\section{Disparate Exposure in Risk-Controlling Recommenders}
\label{app:disparate-impact}
\gdt{In the experiments of \cref{sec:evaluating_effects_of_collective}, we examined the aggregate impact of adversarial collectives on recommendation performance.
We now turn to a finer-grained analysis of \emph{item-group exposure}.
Specifically, we focus on groups of items that share a similar ground-truth likelihood of being flagged as unwanted, \ie with comparable values of $\mathbb{E}[H = 1 \mid G]$.
For these groups, we measure empirical exposure at test time (\cf \cref{eqn:expected-group-exposure}) under the risk-controlling recommender system \emph{in the absence of adversarial users}.
While many groups experience comparable reductions in exposure--as expected, given that risk control replaces risky items with safer alternatives--we observe notable disparities across some groups.
For example, \cref{fig:difference_between_tags_no_adv} reports the exposure reduction for two item groups, $g \in \{54, 23\}$, which share the same ground-truth unwantedness rate ($\mathbb{E}[H = 1 \mid G = 54] \approx \mathbb{E}[H = 1 \mid G = 23] \approx 0.002$).
Despite this similarity, the groups exhibit markedly different exposure dynamics as the level of risk control varies.
In particular, at a target reduction of 50\% in unwanted content, the exposure of group 54 remains relatively stable, whereas group 23 experiences nearly a $30\%$ decrease.
This disparity arises from differences in the distributions of the learned risk scores across groups, as illustrated in \cref{fig:empirical-score-density}.
As a result, even when groups are equally likely to be flagged as unwanted, risk-controlling recommender systems may induce \emph{disparate exposure} \cite{pmlr-v81-ekstrand18b} if the underlying risk predictor $r(i,u)$ exhibits group-dependent biases.
Importantly, these disparities emerge independently of adversarial behaviour, highlighting that risk control might amplify pre-existing biases in the learned risk model rather than introducing new ones.
}

\begin{figure}[t]
          \centering
    \begin{subfigure}[t]{0.49\textwidth}
          \centering
        \includegraphics[width=\linewidth]{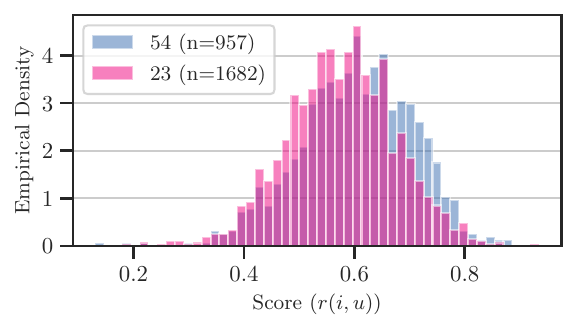}
        \caption{Empirical density of the risk scores}
        \label{fig:empirical-score-density}
     \end{subfigure}
     \begin{subfigure}[t]{0.49\textwidth}
         \centering
           \includegraphics[width=\linewidth]{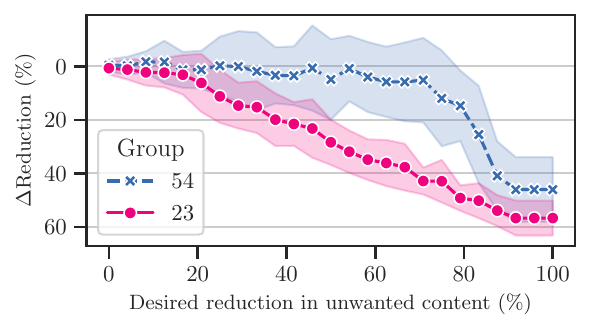}
          \caption{Reduction in exposure in top-k}
        \label{fig:difference_between_tags_no_adv}
     \end{subfigure}
   
    \caption{Exposure reduction for two item groups ($g \in \{54, 23\}$) in \textit{Kuaishou}.
    Although the two groups exhibit similar expected harmfulness, they differ in their risk score distributions: $\mathbb{E}[r(U,I) \mid G = 54] = 0.604$ and $\mathbb{E}[r(U,I) \mid G = 23] = 0.576$ (\cref{fig:empirical-score-density}).
    As a result, risk-control induces disparate exposure across the two groups (\cref{fig:difference_between_tags_no_adv}), even in the \textit{absence of adversarial users}.
    }
\end{figure}

\end{document}